\documentclass[11pt]{article}

\usepackage[margin=0.9in]{geometry}

\usepackage{graphicx}

\usepackage{amssymb}
\usepackage{amsmath}
\usepackage{amsthm}
\usepackage{mathrsfs}

\usepackage{xspace}

\allowdisplaybreaks

\theoremstyle{definition}
\newtheorem{thm}{Theorem}[section]

\newtheorem{proposition}[thm]{Proposition}

\newtheorem{remark}[thm]{Remark}

\numberwithin{thm}{section}

\newcommand{\mR}{\mathbb{R}}
\newcommand{\mP}{\mathcal{P}}

\newcommand{\mG}{\mathcal{G}}
\newcommand{\mN}{\mathcal{N}}

\newcommand{\mAD}{\text{AD}\xspace}
\newcommand{\mSH}{\text{Sh}\xspace}

\newcommand{\mCE}{\textbf{CE}\xspace}
\newcommand{\mCS}{\textbf{CS}\xspace}
\newcommand{\mADD}{\textbf{ADD}\xspace}
\newcommand{\mZOT}{\textbf{ZOT}\xspace}
\newcommand{\mFOT}{\textbf{FOT}\xspace}
\newcommand{\mGN}{\textbf{GN}\xspace}
\newcommand{\mSP}{\textbf{SP}\xspace}
\newcommand{\mOU}{\textbf{OU}\xspace}

\newcommand{\mZG}{\textbf{ZG}\xspace}
\newcommand{\mOE}{\textbf{OE}\xspace}
\newcommand{\mZI}{\textbf{ZI}\xspace}
\newcommand{\mEI}{\textbf{EI}\xspace}

\begin{document}
\title{How outside options are incorporated into payoff distributions}
\author{Takaaki Abe\thanks{Department of Economic Engineering, School of Economics, Kyushu University, 744, Motooka, Nishi-ku, Fukuoka, 819-0395, Japan. Email: takaakiabe@econ.kyushu-u.ac.jp
}}
\date{}
\maketitle

\begin{abstract}
This paper examines how outside options are incorporated into payoff distributions in games with coalition structures. We introduce and characterize the $\alpha$-value, which ``fully'' incorporates outside options, and provide a new characterization of the Aumann-Dr\`{e}ze value as an allocation rule that does not incorporate outside options. We show that the $\chi$-value (Casajus, 2009) is a component-wise convex combination of these two values and thus incorporates outside options in a discounted form.
\end{abstract}

\noindent Keywords: allocation rule; coalition structure; null player; outside option\\ 
\noindent JEL Classification: C71

\section{Introduction}\label{SEC_INTRO}

Casajus (2009) considered the following problem, commonly known as a \textit{gloves game}. Suppose that there are two left-glove holders, $L_1$ and $L_2$, and four right-glove holders, $R_1, ..., R_4$. They are organized into the following four components: 
\[
\{\{L_1,R_1\},\{L_2,R_2\},\{R_3\},\{R_4\}\}.
\]
A matching pair generates a worth of one, whereas a singleton generates no worth. How should the two players in a matching pair divide the worth of one? Perhaps the best-known allocation rule for this problem is the \textit{Aumann-Dr\`{e}ze value} (Aumann and Dr\`{e}ze, 1974), hereafter the \mAD-value. The \mAD-value divides the worth of one equally between $L_1$ and $R_1$. This is because, if the other four players are disregarded, $L_1$ and $R_1$ are symmetric within their two-player component. When all six players are taken into account, however, they are no longer symmetric. Indeed, left gloves are scarcer than right gloves. Player $L_1$ can therefore argue to her partner $R_1$ that she has alternative options outside their component, namely, forming a matching pair with either $R_3$ or $R_4$. The theory of allocation rules for games with coalition structures captures such possibilities through the concept of \textit{outside options}.

Within the framework of games with coalition structures, Wiese (2007) introduced an allocation rule that takes outside options into account. In the gloves game described above, the Wiese value assigns $43/60$ $(=0.7166\ldots)$ to player $L_1$ and $17/60$ $(=0.2833\ldots)$ to player $R_1$. The fact that $L_1$ receives more than $R_1$ reflects that the Wiese value incorporates the outside options available to $L_1$. While acknowledging that the Wiese value incorporates outside options, Casajus (2009) argued that its axiomatic characterization relies on an unappealing axiom. He introduced an alternative axiom, called \textit{Splitting}, and proposed a new allocation rule, the \textit{$\chi$-value}, together with its axiomatic characterization. The $\chi$-value has several desirable features. It takes outside options into account, admits an intuitive axiomatic characterization, and guarantees the existence of stable coalition structures.

Since its introduction by Casajus (2009), the $\chi$-value has become a prominent allocation rule for incorporating outside options and has been studied from various perspectives.
Tuti\'{c} et al. (2011) experimentally compared the $\chi$-value with other allocation rules. Hiller (2011) generalized the Shapley-value term in the formula of the $\chi$-value, while Hiller (2015) applied the $\chi$-value to several special classes of games. Casajus and Tuti\'{c} (2013) introduced asymmetric weights into the allocation rule from a Nash bargaining perspective. Abe (2021) studied the relationship between the Splitting axiom and the class of simple coalition formation games. Yu et al. (2023) analyzed a new allocation rule that incorporates the Owen value (Owen, 1977) into the functional form of the $\chi$-value. Casajus and La Mura (2024) extended the Splitting axiom by introducing the Relative Splitting property. Abe (2024) used the Splitting axiom to characterize allocation rules based on equal surplus sharing.

The starting point of our research is a simple question: when the $\chi$-value is said to incorporate outside options, how, and to what extent, does the value incorporate them into payoff distributions? The existing literature establishes that outside options matter under the allocation rule, but neither its formula nor its axiomatic characterization reveals how outside option claims enter a player's payoff or how strongly they are reflected in the resulting allocation. We answer this question by decomposing a player's payoff into two distinct parts. The first is a baseline allocation determined by the player's within-component contributions. The second is an adjustment based on the outside option claims of the members of the component. This decomposition makes it possible to study each player's outside option claim independently of their within-component productivity.

A key step in our analysis is to extend the \textit{null player out} property (Derks and Haller, 1999). Because the departure of a null player from the game leaves the within-component contributions of the remaining players unchanged, any resulting payoff change reveals how the allocation rule incorporates outside option claims. Building on this observation, we introduce and characterize a new CS-value that ``fully'' incorporates outside option claims, while the \mAD-value serves as the opposite benchmark that does not incorporate them. We show that the $\chi$-value can be expressed as a convex combination of these two extreme benchmarks, showing that the $\chi$-value incorporates outside option claims in a discounted form.

The remainder of this paper is organized as follows. Section \ref{SEC_PREL} introduces the definitions and the basic framework. Section \ref{SEC_OOCLAIM} examines and justifies the standard outside option claims implicitly contained in the formula of the $\chi$-value. Section \ref{SEC_CHAR} provides an axiomatic analysis of how allocation rules incorporate outside option claims. Section \ref{SEC_CONC} summarizes the main findings and presents a table of the characterization results. The proofs are presented in Section \ref{SEC_PROOFS}. The independence of the axioms is provided in Section \ref{SEC_INDEP}.

\section{Preliminaries}\label{SEC_PREL}
\subsection{Games with coalition structures}
Let $\mathcal{U}$ be a countably infinite set of players.
A \textit{coalition} is a finite subset of $\mathcal{U}$.
Let $\mN$ be the set of all finite nonempty subsets of $\mathcal{U}$.
For every $N\in \mN$, let $\mG(N)=\{ v \mid v:2^N \rightarrow \mR, v(\emptyset)=0\}$ be the set of all characteristic functions on $N$, and let $\Pi(N)$ be the set of all partitions of $N$, also called \textit{coalition structures}.
Each element of a coalition structure $\mP \in \Pi(N)$ is called a \textit{component}.
Let $\Gamma=\{ (N,v,\mP) \mid N\in \mN, v\in \mG(N), \mP\in \Pi(N) \}$ denote the set of all \textit{games with coalition structures} (\textit{CS-games}).
A \textit{CS-value} is a function $f$ that maps each $(N,v,\mP)\in \Gamma$ to a payoff vector $f(N,v,\mP)\in \mR^N$.
For every finite set $X$, let $|X|$ denote the cardinality of $X$.

Let $N\in \mN$ and $\mP\in \Pi(N)$.
For every $i\in N$, let $\mP(i)$ denote the component of $\mP$ that contains $i$.
For every $T\subseteq N$, let $\mP(T)=\{C\in \mP \mid C\cap T \neq \emptyset \}$ denote the set of components of $\mP$ that intersect $T$.

For every $N\in \mN$ with $|N|\geq 2$, $v\in \mG(N)$, $\mP\in \Pi(N)$, and $k\in N$, define $v^{-k} \in \mG(N\setminus \{k\})$ as follows: $v^{-k}(S)=v(S)$ for every $S\subseteq N\setminus \{k\}$.
Moreover, $\mP^{-k} \in \Pi(N\setminus \{k\})$ is defined by $\mP^{-k}=\{C\setminus \{k\} \mid C\in \mP,\ C\setminus \{k\} \neq \emptyset\}$.
Similarly, for every $(N,v,\mP)\in\Gamma$ and $K\subsetneq N$, define $v^{-K}(S)=v(S)$ for every $S\subseteq N\setminus K$, and $\mP^{-K}=\{C\setminus K \mid C\in \mP,\ C\setminus K \neq \emptyset\}$.
For notational simplicity, for every $(N,v,\mP)\in\Gamma$ and nonempty $K\subseteq N$, define $v|_{K}\in \mG(K)$ as follows: $v|_{K}(S)=v(S)$ for every $S\subseteq K$.
Similarly, let $\mP|_{K}= \{C\cap K \mid C\in \mP,\ C\cap K\neq \emptyset\} \in \Pi(K)$.

Let $N\in \mN$ and $v\in \mG(N)$.
A player $i\in N$ is a \textit{null player} in $v$ if $v(S\cup \{i\})-v(S)=0$ for every $S\subseteq N\setminus \{i\}$.
Players $i,j\in N$ are \textit{symmetric} in $v$, written $i \overset{v}{\sim} j$, if $v(S\cup \{i\})=v(S\cup \{j\})$ for every $S\subseteq N\setminus \{i,j\}$.
For every nonempty $T\subseteq N$, the \textit{$T$-unanimity game} $u_T\in\mG(N)$ is defined as follows: for every $S\subseteq N$,
\[
u_T(S)=\begin{cases}
1 & \text{if } T\subseteq S,\\
0 & \text{otherwise}.
\end{cases}
\]
For every $v\in\mG(N)$ and nonempty $T\subseteq N$, let $\lambda^v_T=\sum_{R\subseteq T}(-1)^{|T|-|R|}v(R)$,
where $\lambda^v_T$ is called the \textit{Harsanyi dividend} associated with coalition $T$. 
Every characteristic function $v\in\mG(N)$ is uniquely expressed as $v=\sum_{\emptyset \neq T\subseteq N} \lambda^v_T u_T$.

The \textit{Shapley value} is defined as follows: for every $N\in \mN$, $v\in \mG(N)$, and $i\in N$,
\[
\mSH_i(N,v)=\sum_{S\subseteq N\setminus\{i\}}
\frac{|S|!(|N|-|S|-1)!}{|N|!} \left( v(S\cup\{i\})-v(S) \right).
\]
The \textit{Aumann-Dr\`{e}ze value} is defined as follows: for every $(N,v,\mP)\in\Gamma$ and $i\in N$,
\[
\mAD_i(N,v,\mP)=\mSH_i(\mP(i),v|_{\mP(i)}).
\]
Casajus (2009) introduced the \textit{$\chi$-value} as follows: for every $(N,v,\mP)\in\Gamma$ and $i\in N$,
\[
\chi_i(N,v,\mP)
=\mSH_i(N,v)+\frac{1}{|\mP(i)|}(v(\mP(i))-\sum_{j\in \mP(i)}\mSH_j(N,v)).
\]

\subsection{Outside option claims in the $\chi$-value}
The $\chi$-value can be rewritten in the following equivalent form:
\[
\chi_i(N,v,\mP)
=\mAD_i(N,v,\mP)+\frac{1}{|\mP(i)|}\left( |\mP(i)|\cdot \omega^*_i(N,v,\mP)-\sum_{j\in \mP(i)}\omega^*_j(N,v,\mP) \right),
\]
where
\[
\omega^*_i(N,v,\mP)=\mSH_i(N,v)-\mAD_i(N,v,\mP).
\]
This formula consists of two parts: (1) a baseline allocation based on within-component contributions and (2) an adjustment based on outside option claims.
The first part is $\mAD_i(N,v,\mP)$, which aggregates player $i$'s contributions within $\mP(i)$ and does not account for interactions between player $i$ and players outside $\mP(i)$.

To interpret the second part, consider first the expression $|\mP(i)|\cdot \omega^*_i(N,v,\mP)$. 
We call $\omega^*_i(N,v,\mP)$ the \textit{standard outside option claim} of player $i$.
The expression above describes a system of ``transfers'' among the members of $\mP(i)$ based on their standard outside option claims.
The term $|\mP(i)|\cdot \omega^*_i(N,v,\mP)$ indicates that player $i$ receives $\omega^*_i$ from each member of $\mP(i)$, whereas $\sum_{j\in \mP(i)}\omega^*_j(N,v,\mP)$ indicates that player $i$ pays $\omega^*_j$ to each member $j$ of $\mP(i)$. 
This is equivalent to $\sum_{j\in \mP(i)}(\omega^*_i(N,v,\mP)-\omega^*_j(N,v,\mP))$.
Thus, the expression represents player $i$'s net transfer based on the outside option claims of the component members.
The factor $1/|\mP(i)|$ in the formula for the $\chi$-value discounts this net transfer by the size of the component. Therefore, the second part of the formula can be interpreted as an adjustment based on outside option transfers, discounted by the number of members of $\mP(i)$.
Given this interpretation, we pose the following questions:
\begin{itemize}
\item Is $\omega^*$ embedded in the $\chi$-value an appropriate measure of outside option claims?
\item Are there alternatives to the $\chi$-value for aggregating outside option claims?
\end{itemize}
We address the first question in Section \ref{SEC_OOCLAIM} and the second in Section \ref{SEC_CHAR}.

\section{Measuring outside option claims}\label{SEC_OOCLAIM}

\subsection{Standard outside option claim rule}
An \textit{outside option claim rule} (\textit{OO-rule}) is a function $\omega$ that maps each $(N,v,\mP)\in\Gamma$ to a vector $\omega(N,v,\mP)\in\mR^N$.
We call $\omega_i(N,v,\mP)$ the outside option claim of player $i$ in the CS game $(N,v,\mP)$.
As mentioned in the preceding section, the standard OO-rule $\omega^*$ is given as follows: for every $(N,v,\mP)\in\Gamma$ and $i\in N$,
\[
\omega_i^*(N,v,\mP)=\mSH_i(N,v)-\mAD_i(N,v,\mP),
\]
where $\mSH_i(N,v)$ evaluates player $i$'s contributions to all coalitions without imposing the restrictions by the coalition structure, while $\mAD_i(N,v,\mP)$ aggregates player $i$'s contributions within the game restricted to $\mP(i)$ and therefore disregards all interactions between player $i$ and players outside $\mP(i)$.
Therefore, this difference can be interpreted as the value of player $i$'s inter-component contributions involving players outside $\mP(i)$.

The standard OO-rule has the following equivalent representation: for every $(N,v,\mP)\in\Gamma$ and $i\in N$,
\[
\omega^*_i(N,v,\mP)=\sum_{\substack{T\subseteq N \\ i\in T,\ |\mP(T)|\geq 2}}\frac{\lambda^v_T}{|T|}.
\]
The Harsanyi dividend $\lambda^v_T$ represents the net surplus, or synergy, generated by $T$ that cannot be attributed to any proper subcoalition of $T$.
The condition $|\mP(T)|\geq 2$ means that $T$ intersects at least two components of $\mP$. 
Since $i\in T$ implies that $T$ intersects $\mP(i)$, any coalition $T$ satisfying $|\mP(T)|\geq 2$ and $i\in T$ contains at least one player outside $\mP(i)$.
Therefore, $\omega_i^*(N,v,\mP)$ aggregates player $i$'s shares of the net surpluses generated by all inter-component coalitions $T$ containing $i$.

The standard OO-rule also admits an alternative interpretation in terms of a game that separates inter-component interactions from within-component interactions. 
Let $(N,v,\mP)\in \Gamma$. For every nonempty $C\subseteq N$, define $v_{[C]}\in \mG(N)$ as follows: for every $S\subseteq N$, $v_{[C]}(S):=v(S\cap C)$.
For every $C\in\mP$, we call $(N, v_{[C]})$ the \textit{$C$-inside game} associated with $(N,v,\mP)$. The $C$-inside game embeds the game restricted to $C$ (i.e., $v|_C$) in the player set $N$.
We then define 
\[
v^\mP_{\text{out}}=v- \sum_{C\in \mP}v_{[C]}
\]
and call $(N, v^\mP_{\text{out}})$ the \textit{outside game} induced by $(N,v,\mP)$.
By subtracting the inside games associated with all components of $\mP$ from the original game $v$, the outside game captures only the inter-component interactions.
Then, the standard OO-rule is expressed as 
\[
\omega^*(N,v,\mP)=\mSH(N,v^\mP_{\text{out}}).
\]
This representation shows that $\omega^*$ distributes the total worth $v^\mP_{\text{out}}(N)$ generated by all inter-component interactions among the players.

\subsection{Characterization of the standard OO-rule}
In addition to the functional representations presented above, we now examine the standard OO-rule from an axiomatic perspective.
Let $\omega$ be an OO-rule.
For every $N\in \mN$, let $\textbf{0}\in \mG(N)$ denote the game defined by $\textbf{0}(S)=0$ for every $S\subseteq N$. We call $\textbf{0}$ the \textit{zero game}.
The following property states that if no coalition generates any worth, then every player's outside option claim is zero.

\begin{itemize}
\item[] \textbf{Zero-Game (\mZG).} For every $N\in\mN$, $\mP\in\Pi(N)$, and $i\in N$, $\omega_i(N,\textbf{0},\mP)=0$.
\end{itemize}

The next property requires that the total value attributable to inter-component interactions be fully accounted for by the outside option claims of all players. The quantity $v(N)-\sum_{C\in\mP}v(C)\ (=v^\mP_{\mathrm{out}}(N))$ represents the total value generated by inter-component interactions under $\mP$.
\begin{itemize}
\item[] \textbf{Outside Option Efficiency (\mOE).} 
For every $(N,v,\mP)\in\Gamma$,
$\sum_{j\in N}\omega_j(N,v,\mP)=v(N)-\sum_{C\in\mP}v(C)$.
\end{itemize}

The next two properties describe how outside option claims respond to changes in the net surplus attributable to a particular coalition. They are adapted, respectively, from Coalitional Strategic Equivalence and Fair Ranking, both introduced by Chun (1989, 1991).

Adding $\lambda u_T$ to $v$ changes the net surplus associated with coalition $T$ by $\lambda$, without changing the net surplus associated with any other coalition.
Chun's Coalitional Strategic Equivalence states that such a change specific to $T$ is confined to the members of $T$ and does not affect the payoffs of players in $N\setminus T$.
Applying this principle to outside option claims, Zero Increment requires that a change in the net surplus associated with $T$ does not affect the outside option claim of any player in $N\setminus T$.

\begin{itemize}
\item[] \textbf{Zero Increment (\mZI).} 
For every $(N,v,\mP)\in\Gamma$, nonempty $T\subseteq N$, $\lambda\in\mR$, and $i\in N\setminus T$,
$\omega_i(N,v+\lambda u_T,\mP)=\omega_i(N,v,\mP)$.
\end{itemize}

Equal Increment adapts the fairness principle underlying Chun's Fair Ranking to outside option claims. The property states that every member of $T$ experiences the same change in their outside option claim when the net surplus associated with $T$ changes.

\begin{itemize}
\item[] \textbf{Equal Increment (\mEI).} 
For every $(N,v,\mP)\in\Gamma$, nonempty $T\subseteq N$, $\lambda\in\mR$, and $i,j\in T$,
$\omega_i(N,v+\lambda u_T,\mP)-\omega_i(N,v,\mP)=\omega_j(N,v+\lambda u_T,\mP)-\omega_j(N,v,\mP)$.
\end{itemize}

These properties yield the following characterization of the standard OO-rule.
\begin{proposition}\label{PROP_OMEGA}
An OO-rule $\omega$ satisfies \mZG, \mOE, \mZI, and \mEI if and only if $\omega=\omega^*$.
\end{proposition}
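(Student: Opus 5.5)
We prove the two directions separately. Existence checks that $\omega^*$ satisfies the four axioms. Uniqueness uses the Harsanyi decomposition $v=\sum_{T}\lambda^v_T u_T$ and an induction that adds one dividend term at a time, starting from the zero game.

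\emph{Existence.} We work with the dividend representation $\omega^*_i(N,v,\mP)=\sum_{T\ni i,\,|\mP(T)|\geq 2}\lambda^v_T/|T|$. \mZG holds because all dividends of $\textbf{0}$ vanish. For \mOE, summing over $i\in N$ gives $\sum_{T:|\mP(T)|\geq 2}\lambda^v_T$. On the other side,
\[
v(N)-\sum_{C\in\mP}v(C)=\sum_T\lambda^v_T-\sum_{C\in\mP}\sum_{T\subseteq C}\lambda^v_T .
\]
The coalitions contained in some component are exactly those with $|\mP(T)|=1$, so the two expressions agree. For \mZI and \mEI, note that passing from $v$ to $v+\lambda u_T$ changes only $\lambda^v_T$, and it changes it by $\lambda$. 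Hence every player's claim is unchanged if $|\mP(T)|=1$. If $|\mP(T)|\geq 2$, each $i\in T$ gains $\lambda/|T|$ and each player outside $T$ gains nothing.

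\emph{Uniqueness.} Fix $N$ and $\mP$, and let $\omega$ satisfy the four axioms. Write $v=\sum_{k=1}^m\lambda_{T_k}u_{T_k}$ and set $v_0=\textbf{0}$, $v_k=v_{k-1}+\lambda_{T_k}u_{T_k}$. By \mZG, $\omega(N,v_0,\mP)=0$. Now consider the step from $v_{k-1}$ to $v_k$ and the increments $\Delta_i=\omega_i(N,v_k,\mP)-\omega_i(N,v_{k-1},\mP)$.
\begin{itemize}
\item By \mZI, $\Delta_i=0$ for every $i\notin T_k$.
\item By \mEI, $\Delta_i=\delta$ for every $i\in T_k$, for some common $\delta$.
\item Apply \mOE to both $v_k$ and $v_{k-1}$ and subtract. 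This gives $|T_k|\delta=\lambda_{T_k}\bigl(u_{T_k}(N)-\sum_{C\in\mP}u_{T_k}(C)\bigr)$.
\end{itemize}
The bracket equals $1-1=0$ if $T_k$ lies inside one component, and $1$ otherwise. Therefore $\delta=\lambda_{T_k}/|T_k|$ when $|\mP(T_k)|\geq 2$, and $\delta=0$ otherwise. Summing the increments over $k$ reproduces the dividend formula for $\omega^*$, so $\omega=\omega^*$.

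The only step needing care is the \mOE computation, which rests on the observation that $T$ is contained in some $C\in\mP$ exactly when $|\mP(T)|=1$. Note also that \mEI is applied with the base game $v_{k-1}$, which is a legitimate arbitrary game, so all the axioms are used within their stated scope.
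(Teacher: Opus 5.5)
Your proof is correct and follows the paper's uniqueness argument essentially step for step. You fix $N$ and $\mP$, build $v$ from the zero game by adding one term $\lambda_T u_T$ at a time, and pin down each increment using \mZI (zero off $T$), \mEI (common on $T$) and \mOE (total $\lambda\bigl(1-\sum_{C\in\mP}u_T(C)\bigr)$, which equals $\lambda$ or $0$ according to whether $|\mP(T)|\geq 2$). Your existence check via the dividend representation spells out what the paper only asserts, while you take the identity between the dividend formula and $\mSH_i-\mAD_i$ from the stated representation of $\omega^*$, which the paper verifies at the end of its proof.
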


\begin{remark}
One might seek an alternative characterization of the standard OO-rule by combining \mOE, Additivity, Null Player, and Symmetry.\footnote{For OO-rules, these properties are defined as follows.
Additivity: For every $N\in \mN$, $\mP\in\Pi(N)$, and $v,w\in\mG(N)$, $\omega(N,v+w,\mP) = \omega(N,v,\mP)+\omega(N,w,\mP)$.
Null Player: For every $(N,v,\mP)\in\Gamma$ and $i\in N$, if $i$ is a null player in $v$, then $\omega_i(N,v,\mP)=0$.
Symmetry: For every $(N,v,\mP)\in\Gamma$ and $i,j\in N$, if $i \overset{v}{\sim} j$, then $\omega_i(N,v,\mP)=\omega_j(N,v,\mP)$.
}
This approach, however, does not work because $\omega^*$ incorporates the \mAD-value, which violates Symmetry.

One may instead weaken Symmetry to Coalitional Symmetry, which requires symmetry only for symmetric players belonging to the same component (see Section \ref{SEC_CHAR} for the formal definition).
The standard OO-rule $\omega^*$ satisfies this weaker property. However, Coalitional Symmetry is satisfied by a variety of OO-rules.
For example, consider the following OO-rule $\widehat{\omega}$:
\[
\widehat{\omega}_i(N,v,\mP)
=\sum_{\substack{T\subseteq N \\ i\in T,\ |\mP(T)|\geq 2}}\frac{\lambda_T^v}{|\mP(T)|\,|T\cap\mP(i)|}.
\]
For each inter-component coalition $T$, this rule first divides the Harsanyi dividend equally among the components in $\mP(T)$ and then divides each component's share equally among the members of $T$ belonging to that component. The rule $\widehat{\omega}$ and various alternatives satisfy \mOE, Additivity, Null Player, and Coalitional Symmetry.
In this sense, Symmetry is too strong to be satisfied by the standard OO-rule, whereas Coalitional Symmetry is too weak to characterize it.
\end{remark}

\section{CS-values and discounted outside option claims}\label{SEC_CHAR}

\subsection{Axioms for CS-values}
In this section, we analyze how CS-values aggregate outside option claims. In particular, we introduce a CS-value that ``fully'' incorporates the standard outside option claims of all players and show that the $\chi$-value incorporates these claims in a discounted form. The $\chi$-value satisfies the following three basic properties.

\begin{itemize}
\item[] \textbf{Component Efficiency (\mCE).} For every $(N,v,\mP)\in \Gamma$ and $C\in \mP$, $\sum_{j\in C} f_j(N,v,\mP)=v(C)$.

\item[] \textbf{Component Symmetry (\mCS).} For every $(N,v,\mP)\in \Gamma$ and $i,j\in N$, if $\mP(i)=\mP(j)$ and $i \overset{v}{\sim} j$, then $f_i(N,v,\mP)=f_j(N,v,\mP)$.

\item[] \textbf{Additivity (\mADD).} For every $N\in \mN$, $\mP\in\Pi(N)$, and $v,w\in\mG(N)$, $f(N,v+w,\mP)=f(N,v,\mP)+f(N,w,\mP)$.
\end{itemize}

These properties do not specify how a CS-value takes outside option claims into account. \mCE states that the total payoff assigned to the members of a component is equal to the worth of the component. \mCS requires equal treatment of symmetric players who belong to the same component. This requirement is also known as Restricted Equal Treatment. \mADD specifies how a CS-value responds to the addition of games.
In addition to these properties, the $\chi$-value satisfies the following weakened version of the null player property, which requires a null player to receive zero when the coalition structure is the grand coalition.
\begin{itemize}
\item[] \textbf{Grand Coalition Null Player (\mGN).} For every $N\in \mN$, $v\in\mG(N)$, and $i\in N$, if $i$ is a null player in $v$, then $f_i(N,v,\{N\})=0$.
\end{itemize}

The final property used to characterize the $\chi$-value is \textit{Splitting}. This property concerns changes in payoffs resulting from a refinement of the coalition structure.
Let $\mP,\mP' \in \Pi(N)$. We say that $\mP$ is \textit{finer than} $\mP'$, written $\mP \sqsubseteq \mP'$, if for every $i\in N$, $\mP(i)\subseteq \mP'(i)$.
\begin{itemize}
\item[] \textbf{Splitting (\mSP).} For every $N\in \mN$, $v\in\mG(N)$, $\mP,\mP' \in\Pi(N)$, and $i,j\in N$, if $\mP\sqsubseteq \mP'$ and $\mP(i)=\mP(j)$, then $f_i(N,v,\mP)-f_i(N,v,\mP')=f_j(N,v,\mP)-f_j(N,v,\mP')$.
\end{itemize}
Splitting requires that, when a coalition structure is refined, the payoffs of any two players who remain together in the same component change by the same amount.
Casajus (2009) showed that the $\chi$-value is the unique CS-value satisfying \mCE, \mCS, \mADD, \mGN, and \mSP.
Casajus (2009) introduced Splitting as an intuitively appealing alternative to Wiese's (2007) Outside Options for Unanimity Games property, defined as follows.
\begin{itemize}
\item[] \textbf{Outside Options for Unanimity Games (\mOU).}
For every $N\in\mN$, $\mP\in\Pi(N)$, $C\in\mP$, and nonempty $T\subseteq N$,
\[
\sum_{j\in C\setminus T}f_j(N,u_T,\mP) =
\begin{cases}
0 & \text{if }|\mP(T)|=1, \\
-\frac{|C\cap T|}{|T|} \cdot \frac{|C\setminus T|}{|C\cup T|} & \text{if }|\mP(T)|>1.
\end{cases}
\]
\end{itemize}
Wiese (2007) characterized the \textit{Wiese-value} (the $W$-value) by \mCE, \mCS, \mADD, and \mOU.\footnote{Wiese (2007) originally imposed Linearity. As Casajus (2009) observed, however, \mADD is sufficient.}
The $W$-value is defined as follows: for every $(N,v,\mP)\in\Gamma$ and $i\in N$,
\[
W_i(N,v,\mP)=\frac{1}{|\Sigma(N)|}\sum_{\sigma\in\Sigma(N)}
\begin{cases}
v(\mP(i)) -\sum_{j\in\mP(i)\setminus\{i\}} [v(P_j^\sigma \cup\{j\})-v(P_j^\sigma)] &\text{if }\sigma\in\Sigma_i(N,\mP),\\
v(P_i^\sigma \cup\{i\})-v(P_i^\sigma) &\text{if }\sigma\notin\Sigma_i(N,\mP),
\end{cases}
\]
where $\Sigma(N)$ represents the set of orderings of $N$, $P_i^\sigma$ denotes the set of predecessors of player $i$ in ordering $\sigma\in\Sigma(N)$, and $\Sigma_i(N,\mP) =\{ \sigma\in\Sigma(N) \mid \sigma(j)<\sigma(i) \text{ for every } j\in\mP(i)\setminus\{i\} \}$ is the set of orderings in which player $i$ is the last member of $\mP(i)$.

\subsection{Separating within-component contributions and outside option claims}
The $\chi$-value was introduced as a CS-value that incorporates outside options. Our question is how, and to what extent, it incorporates them into payoff distributions.
The axioms used to characterize the $\chi$-value do not directly specify how each player's outside option claim is incorporated into the player's payoff.
In particular, \mSP describes how a refinement of the coalition structure affects a player's payoff. The resulting payoff change reflects both a change in the player's within-component contributions and a change in the player's outside option claim.
Thus, \mSP captures these two effects jointly rather than separating them.

To separate the effect of outside option claims from the effect of within-component contributions, we focus on the departure of a null player from the game. This approach adapts the \textit{Null Player Out} property of Derks and Haller (1999) to the analysis of  outside option claims.\footnote{Related axiomatic analyses involving the removal of players in games \textit{without} coalition structures include Kamijo and Kongo (2010) and Kongo (2024).}
If a null player in a component leaves the game, neither the productive contributions of the remaining players nor the worth of the component changes. However, the remaining players have one fewer component member against whom they can assert their outside option claims.
Therefore, if a CS-value takes those claims into account, the remaining players' payoffs may change. Since their within-component contributions remain unchanged, the resulting payoff changes provide a means of identifying how the CS-value responds to their outside option claims.

To formalize this observation, let $k\in N$ be a null player in $v$, and let $C=\{k,i_1,...,i_m\}\in\mP$ with $m\geq 1$.
As suggested by the transfer interpretation of the $\chi$-value in Section \ref{SEC_PREL}, suppose that each member of $C$ asserts an outside option claim against each of the other members of $C$.
Consider player $i_1$. Since $k$ is a null player, removing $k$ from the game changes neither $i_1$'s productive contributions within $C$ nor the worth of the component. The relevant change is that $i_1$ can no longer assert her outside option claim against $k$.
This motivates using the payoff difference $f_{i_1}(N,v,\mP)-f_{i_1}(N\setminus \{k\},v^{-k},\mP^{-k})$ to measure the extent to which the CS-value $f$ incorporates player $i_1$'s outside option claim in her payoff.

The following two axioms impose contrasting requirements on how a CS-value incorporates outside option claims.
The first axiom states that a CS-value does not incorporate outside option claims.
\begin{itemize}
\item[] \textbf{Zero Outside Option Transfer (\mZOT).} Let $(N,v,\mP)\in \Gamma$ with $|N|\geq 2$. If $k\in N$ is a null player in $v$ and $|\mP(k)|\geq 2$, then for every $i\in \mP(k)\setminus \{k\}$,
\[
f_i(N,v,\mP)-f_i(N\setminus \{k\},v^{-k},\mP^{-k})=0.
\]
\end{itemize}

The second axiom states that a CS-value $f$ ``fully'' incorporates outside option claims. To explain this requirement, consider the two terms on the right-hand side of the equality below. The first term, $f_i(N,v,\{N\})$, incorporates all transactions involving player $i$, including those with players outside $\mP(i)$.
In contrast, the second term, $f_i(\mP(i),v|_{\mP(i)},\{\mP(i)\})$, incorporates only transactions between player $i$ and the other members of $\mP(i)$.
Hence, the difference, $f_i(N,v,\{N\})-f_i(\mP(i),v|_{\mP(i)},\{\mP(i)\})$, represents the aggregate value of transactions involving player $i$ and at least one player outside $\mP(i)$.

\begin{itemize}
\item[] \textbf{Full Outside Option Transfer (\mFOT).} Let $(N,v,\mP)\in \Gamma$ with $|N|\geq 2$. If $k\in N$ is a null player in $v$ and $|\mP(k)|\geq 2$, then for every $i\in \mP(k)\setminus \{k\}$,
\[
f_i(N,v,\mP)-f_i(N\setminus \{k\},v^{-k},\mP^{-k})=f_i(N,v,\{N\})-f_i(\mP(i),v|_{\mP(i)},\{\mP(i)\}).
\]
\end{itemize}

The following proposition clarifies the relationship between \mGN and these requirements.
\begin{proposition}\label{PROP_CE_FOT_GN}
If a CS-value $f$ satisfies \mCE and either \mFOT or \mZOT, then $f$ satisfies \mGN.
\end{proposition}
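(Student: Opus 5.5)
The plan is to work directly with the grand coalition structure $\{N\}$ and use \mCE twice: once in the original game and once in the game with the null player removed. Let $N\in\mN$, $v\in\mG(N)$, and let $k\in N$ be a null player in $v$. We must show $f_k(N,v,\{N\})=0$.

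First we dispose of the trivial case $|N|=1$, so $N=\{k\}$. Since $k$ is null, $v(\{k\})=v(\emptyset)=0$. Then \mCE gives $f_k(N,v,\{N\})=v(N)=0$.

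Now let $|N|\geq 2$ and take $\mP=\{N\}$. Then $\mP(k)=N$ has at least two members, so the hypotheses of \mZOT and \mFOT are met for every $i\in N\setminus\{k\}$. The first observation is that \mFOT collapses to \mZOT at the grand coalition. Indeed, $\mP(i)=N$ and $v|_{N}=v$, so the right-hand side of \mFOT is $f_i(N,v,\{N\})-f_i(N,v,\{N\})=0$. Hence under either axiom we obtain, for every $i\in N\setminus\{k\}$,
\[
f_i(N,v,\{N\})=f_i(N\setminus\{k\},v^{-k},\{N\setminus\{k\}\}),
\]
where we use that $\{N\}^{-k}=\{N\setminus\{k\}\}$.

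Next we sum this identity over $i\in N\setminus\{k\}$. Applying \mCE in the reduced game $(N\setminus\{k\},v^{-k},\{N\setminus\{k\}\})$, the sum equals $v(N\setminus\{k\})$. Because $k$ is null, $v(N\setminus\{k\})=v(N)$. Applying \mCE in the original game then gives
\[
f_k(N,v,\{N\})=v(N)-\sum_{i\in N\setminus\{k\}}f_i(N,v,\{N\})=v(N)-v(N)=0,
\]
which is \mGN.

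No induction is needed, and there is no real obstacle here. The only points requiring care are the following:
\begin{itemize}
\item checking that \mFOT's right-hand side vanishes when $\mP=\{N\}$;
\item checking that $|\mP(k)|\geq 2$ holds exactly when $|N|\geq 2$, which is why the singleton case is handled separately.
\end{itemize}
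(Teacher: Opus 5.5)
Your proposal is correct and follows the paper's proof essentially step for step. It handles the singleton case via \mCE, observes that the right-hand side of \mFOT vanishes at $\{N\}$ so that both axioms yield the same identity, and compares \mCE in the original and reduced games using $v(N)=v(N\setminus\{k\})$.
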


In the characterization of the $\chi$-value by Casajus (2009), \mGN plays only a limited role: together with \mCE, \mCS, and \mADD, it serves to establish that the CS-value coincides with the Shapley value when $\mP=\{N\}$.
Although Wiese (2007) does not directly use \mGN to characterize the $W$-value, the $W$-value also satisfies \mGN, as it coincides with the Shapley value when $\mP=\{N\}$. The \mAD-value also satisfies \mCE, \mCS, \mADD, and \mGN.

These observations indicate that \mCE, \mCS, \mADD, and \mGN determine a CS-value when the coalition structure is the grand coalition but do not determine how the CS-value incorporates outside option claims under a general coalition structure.
In the next subsection, we introduce a new CS-value and show that \mFOT and \mZOT distinguish between contrasting ways of incorporating outside option claims while holding within-component contributions fixed.

\subsection{The $\alpha$-value}
We define the \textit{$\alpha$-value} as follows: for every $(N,v,\mP)\in \Gamma$ and $i\in N$,
\[
\alpha_i(N,v,\mP)= \mAD_i(N,v,\mP) + |\mP(i)| \cdot \omega^*_i(N,v,\mP) - \sum_{j\in \mP(i)}\omega^*_j(N,v,\mP),
\]
where $\omega^*_j(N,v,\mP)=\mSH_j(N,v)-\mAD_j(N,v,\mP)$ is the standard outside option claim of player $j$, characterized in Section \ref{SEC_OOCLAIM}.

To establish some basic properties of the $\alpha$-value, we first introduce a more general construction.
For convenience, we call a function $\delta$ that maps each pair $(N,v)$ to a payoff vector $\delta(N,v)\in \mR^N$ a \textit{TU-value}.
The Shapley value is an example of a TU-value.
We consider the following standard properties of TU-values.
\begin{itemize}
\item[] \textbf{Efficiency}: For every $N\in \mN$ and $v\in\mG(N)$, $\sum_{j\in N}\delta_j(N,v)=v(N)$.
\item[] \textbf{Symmetry}: For every $N\in \mN$, $v\in\mG(N)$, and $i,j\in N$, if $i \overset{v}{\sim} j$, then $\delta_i(N,v)=\delta_j(N,v)$.
\item[] \textbf{Additivity}: For every $N\in \mN$ and $v,w\in\mG(N)$, $\delta(N,v+w)=\delta(N,v)+\delta(N,w)$.
\item[] \textbf{Null Player}: For every $N\in \mN$ and $v\in\mG(N)$, if $i\in N$ is a null player in $v$, then $\delta_i(N,v)=0$.
\item[] \textbf{Null Player Out}: For every $N\in \mN$ with $|N|\geq 2$ and $v\in\mG(N)$, if $k\in N$ is a null player in $v$, then $\delta_i(N,v)=\delta_i(N\setminus \{k\},v^{-k})$ for every $i\in N\setminus \{k\}$.
\end{itemize}

The following proposition generalizes the construction underlying the $\alpha$-value to an arbitrary TU-value $\delta$. The $\alpha$-value is recovered when $\delta=\mSH$.
It shows how the induced CS-value $f^\delta$ inherits the properties of $\delta$.

\begin{proposition}\label{PROP_TU2CS}
Let $\delta$ be a TU-value. Define the CS-value $f^\delta$ as follows: for every $(N,v,\mP)\in \Gamma$ and $i\in N$,
\begin{align*}
f^\delta_i(N,v,\mP)=\delta_i(\mP(i), v|_{\mP(i)}) 
&+ |\mP(i)| \cdot (\delta_i(N,v) - \delta_i(\mP(i), v|_{\mP(i)})) \\
&- \sum_{j\in \mP(i)}(\delta_j(N,v) - \delta_j(\mP(i), v|_{\mP(i)})).
\end{align*}
Then the following statements hold:
\begin{itemize}
\item[i.] If $\delta$ satisfies Efficiency, then $f^\delta$ satisfies \mCE. 
\item[ii.] If $\delta$ satisfies Symmetry, then $f^\delta$ satisfies \mCS. 
\item[iii.] If $\delta$ satisfies Additivity, then $f^\delta$ satisfies \mADD.
\item[iv.] If $\delta$ satisfies Null Player and Null Player Out, then $f^\delta$ satisfies \mFOT.
\end{itemize}
\end{proposition}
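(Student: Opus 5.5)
The plan is to verify the four statements one at a time, directly from the formula. Throughout, fix $(N,v,\mP)$ and $i\in N$, write $C=\mP(i)$, and set
\[
d_j = \delta_j(N,v)-\delta_j(C,v|_C)\quad (j\in C).
\]
Then $f^\delta_i(N,v,\mP)=\delta_i(C,v|_C)+|C|\,d_i-\sum_{j\in C}d_j$. Items i--iii are short bookkeeping; item iv is the only step needing care.

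For i, I will sum $f^\delta_j(N,v,\mP)$ over $j\in C$. Efficiency of $\delta$ on $(C,v|_C)$ gives $\sum_{j\in C}\delta_j(C,v|_C)=v(C)$. The transfer terms cancel, because $\sum_{j\in C}|C|\,d_j-|C|\sum_{j\in C}d_j=0$. Hence \mCE holds.

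For ii, let $i\overset{v}{\sim}j$ with $\mP(i)=\mP(j)=C$. Then $i$ and $j$ are also symmetric in $v|_C$, since the defining condition only involves coalitions inside $C$. Symmetry of $\delta$ therefore gives $\delta_i(N,v)=\delta_j(N,v)$ and $\delta_i(C,v|_C)=\delta_j(C,v|_C)$, so $d_i=d_j$. The remaining term $\sum_{l\in C}d_l$ is common to both players, so $f^\delta_i=f^\delta_j$.

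For iii, I will use that $v\mapsto v|_C$ is linear, so $(v+w)|_C=v|_C+w|_C$. Additivity of $\delta$ then makes every term in the formula additive in $v$.

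For iv, let $k$ be null in $v$ with $C=\mP(k)$ and $|C|\ge 2$, let $i\in C\setminus\{k\}$, and write $C'=C\setminus\{k\}\neq\emptyset$. I will check four facts in order.
\begin{itemize}
\item[(a)] $k$ is also null in $v|_C$, so Null Player gives $\delta_k(N,v)=\delta_k(C,v|_C)=0$, hence $d_k=0$.
\item[(b)] For each $j\in C'$, Null Player Out applied to $(N,v)$ and to $(C,v|_C)$ (both have at least two players) gives $\delta_j(N,v)=\delta_j(N\setminus\{k\},v^{-k})$ and $\delta_j(C,v|_C)=\delta_j(C',v|_{C'})$. Since $\mP^{-k}(j)=C'$ and $(v^{-k})|_{C'}=v|_{C'}$, the quantity $d_j$ coincides with its counterpart $d'_j$ in the reduced game.
\item[(c)] Consequently
\[
f^\delta_i(N,v,\mP)=\delta_i(C',v|_{C'})+|C|\,d_i-\sum_{j\in C'}d_j,\qquad f^\delta_i(N\setminus\{k\},v^{-k},\mP^{-k})=\delta_i(C',v|_{C'})+(|C|-1)\,d_i-\sum_{j\in C'}d_j,
\]
so the left-hand side of \mFOT equals $d_i$.
\item[(d)] On the right-hand side, the grand-coalition case has all transfer terms vanishing, so $f^\delta_i(N,v,\{N\})=\delta_i(N,v)$ and $f^\delta_i(C,v|_C,\{C\})=\delta_i(C,v|_C)$. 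Their difference is also $d_i$.
\end{itemize}
This establishes \mFOT. The main obstacle is only the bookkeeping in iv: keeping track of which game each $\delta$ term refers to, and confirming that the restrictions and reductions commute, i.e.\ $(v^{-k})|_{C'}=v|_{C'}$ and $\mP^{-k}(i)=C'$.
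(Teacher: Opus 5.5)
Your proposal is correct and follows essentially the same route as the paper. Items i--iii are the same direct verifications. For iv, the paper also applies Null Player Out to both $(N,v)$ and $(\mP(i),v|_{\mP(i)})$ and Null Player to $k$, reducing the left-hand side of \mFOT to $\delta_i(N,v)-\delta_i(\mP(i),v|_{\mP(i)})$, which matches the right-hand side. Your shorthand $d_j$ and your choice to apply Null Player before Null Player Out are only cosmetic differences.
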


Setting $\delta=\mSH$ in Proposition \ref{PROP_TU2CS} shows that the $\alpha$-value satisfies \mCE, \mCS, \mADD, and \mFOT. By Proposition \ref{PROP_CE_FOT_GN}, the $\alpha$-value also satisfies \mGN. Therefore, the $\alpha$-value coincides with the Shapley value when $\mP=\{N\}$.
Moreover, the following proposition shows that the $\alpha$-value is the unique CS-value satisfying these properties.

\begin{proposition}\label{PROP_ALPHA}
A CS-value $f$ satisfies \mCE, \mCS, \mADD, and \mFOT  if and only if $f=\alpha$.
\end{proposition}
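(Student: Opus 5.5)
Existence follows from Proposition \ref{PROP_TU2CS} with $\delta=\mSH$, since the Shapley value satisfies Efficiency, Symmetry, Additivity, Null Player and Null Player Out. So $\alpha$ satisfies \mCE, \mCS, \mADD and \mFOT. For uniqueness, let $f$ satisfy the four axioms. We show $f=\alpha$ by induction on $|N|$, treating the grand-coalition case first.

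\emph{Grand coalition.} By Proposition \ref{PROP_CE_FOT_GN}, $f$ satisfies \mGN. When $\mP=\{N\}$, \mCE, \mCS, \mADD and \mGN reduce to Efficiency, Symmetry, Additivity and Null Player for $f(N,\cdot,\{N\})$, so Shapley's theorem gives $f(N,v,\{N\})=\mSH(N,v)$. The same holds for every restricted game $(\mP(i),v|_{\mP(i)},\{\mP(i)\})$. Hence the right-hand side of \mFOT is the fixed quantity $\mSH_i(N,v)-\mAD_i(N,v,\mP)=\omega^*_i(N,v,\mP)$.

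\emph{Reduction to unanimity games.} By \mADD, and since the claim for $\lambda u_T$ follows from additivity-type arguments applied to $u_T$ and $-u_T$, it suffices to determine $f(N,\lambda u_T,\mP)$ for every nonempty $T$, every $\lambda\in\mR$ and every $\mP$. Fix such a game and a component $C\in\mP$.

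\emph{Case $C\subseteq T$.} All members of $C$ are symmetric in $\lambda u_T$. So \mCS and \mCE pin down $f_j=\lambda u_T(C)/|C|$ for $j\in C$.

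\emph{Case $C\not\subseteq T$, $|C|\geq 2$.} Pick $k\in C\setminus T$; $k$ is a null player. For each $i\in C\setminus\{k\}$, \mFOT gives
\[
f_i(N,\lambda u_T,\mP)=f_i(N\setminus\{k\},\lambda u_T^{-k},\mP^{-k})+\omega^*_i(N,\lambda u_T,\mP).
\]
The first term is determined by the induction hypothesis on $|N|$; the base case $|N|=1$ is fixed by \mCE. This determines $f_i$ for all $i\in C\setminus\{k\}$, and $f_k$ then follows from \mCE. If $|C\setminus T|\geq 2$, the outcome does not depend on the choice of $k$, because we are only proving that \emph{some} value is forced and $\alpha$ satisfies all the axioms; consistency is therefore automatic.

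\emph{Case $C\not\subseteq T$, $|C|=1$.} Here $C=\{k\}$ with $k\notin T$, and \mCE gives $f_k=\lambda u_T(\{k\})=0$.

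\emph{Conclusion.} In every case $f(N,\lambda u_T,\mP)$ is uniquely determined. Since $\alpha$ satisfies all the axioms, this forces $f=\alpha$.

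The induction runs on $|N|$ for all $\mP$ simultaneously. The grand-coalition case is handled directly by Shapley's theorem, and that case is what makes the \mFOT right-hand side known. The main point needing care is to ensure that the \mFOT step never refers to $f$ on the same-size game with an unknown structure. It does not: the right-hand side involves only grand-coalition games, already fixed as Shapley, and the left-hand reduction strictly shrinks $N$. So no real obstacle remains beyond bookkeeping.
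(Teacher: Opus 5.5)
Your proof is correct and follows essentially the paper's route: \mGN via Proposition~\ref{PROP_CE_FOT_GN}, reduction to $\lambda u_T$ by \mADD, \mFOT applied to null players in $C\setminus T$, and \mCE with \mCS to close. The difference is bookkeeping: you remove a single null player and appeal to induction on $|N|$ plus Shapley's characterization for $\mP=\{N\}$ (so the \mFOT right-hand side is $\omega^*_i$), whereas the paper removes all of $C\setminus T$ one by one within unanimity games and matches the resulting explicit payoffs with $\alpha$. Your aside that the $\lambda u_T$ case follows from $u_T$ and $-u_T$ is not justified by additivity alone for irrational $\lambda$. It is also unneeded, since your case analysis handles every $\lambda\in\mR$ directly.
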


The following result provides the opposite benchmark. The $\alpha$-value and the \mAD-value can be interpreted as two extreme CS-values: the $\alpha$-value fully incorporates outside option claims, while the \mAD-value does not incorporate them.

\begin{proposition}\label{PROP_AD}
A CS-value $f$ satisfies \mCE, \mCS, \mADD, and \mZOT  if and only if $f=\mAD$.
\end{proposition}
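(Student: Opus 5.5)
Both directions are needed: that \mAD satisfies the four axioms, and that they pin \mAD down uniquely.

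\emph{Existence.} \mCE, \mCS and \mADD for the \mAD-value follow from Efficiency, Symmetry and Additivity of the Shapley value on the restricted game $(\mP(i),v|_{\mP(i)})$. For \mZOT, let $k$ be a null player in $v$ with $|\mP(k)|\geq 2$ and let $i\in\mP(k)\setminus\{k\}$. Then $k$ is also null in $v|_{\mP(k)}$, and $\mP^{-k}(i)=\mP(k)\setminus\{k\}$. Null Player Out of the Shapley value gives $\mAD_i(N,v,\mP)=\mAD_i(N\setminus\{k\},v^{-k},\mP^{-k})$.

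\emph{Uniqueness.} Let $f$ satisfy the four axioms. By \mADD and the unanimity decomposition $v=\sum_T\lambda^v_T u_T$, it suffices to show $f(N,\lambda u_T,\mP)=\mAD(N,\lambda u_T,\mP)$ for every $N$, $\mP$, nonempty $T\subseteq N$ and $\lambda\in\mR$. Additivity alone gives $f(N,\mathbf{0},\mP)=0$ and rational homogeneity, not full real homogeneity. I therefore avoid homogeneity and argue directly for each game $\lambda u_T$.

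I argue by induction on $|N|$. Fix $(N,\lambda u_T,\mP)$ and $i\in N$; players outside $T$ are null. There are three cases.

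\begin{itemize}
\item[(a)] \emph{Some null $k\neq i$ lies in $\mP(i)$.} Then \mZOT gives $f_i(N,\lambda u_T,\mP)=f_i(N\setminus\{k\},\lambda u_T^{-k},\mP^{-k})$. Since $k\notin T$, the game $\lambda u_T^{-k}$ is the unanimity game $\lambda u_T$ on $N\setminus\{k\}$. The induction hypothesis makes the right-hand side equal to $\mAD_i$ of the smaller game, and this equals $\mAD_i(N,\lambda u_T,\mP)$ by the existence part.
\item[(b)] \emph{$\mP(i)\subseteq T$.} All members of $\mP(i)$ are symmetric in $\lambda u_T$, so by \mCS and \mCE each receives $\lambda u_T(\mP(i))/|\mP(i)|$. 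This equals the \mAD payoff.
\item[(c)] \emph{$\mP(i)\not\subseteq T$ and no null $k\neq i$ lies in $\mP(i)$.} Then $\mP(i)=\{i\}$ with $i\notin T$, and \mCE gives $f_i=\lambda u_T(\{i\})=0$, matching \mAD. (Here $\lambda u_T(\{i\})=0$ since $i\notin T$.)
\end{itemize}

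In case (a) it remains to check that the \mAD payoff is unchanged by dropping $k$; this is exactly the existence argument. The induction therefore needs no base beyond games in which cases (b) and (c) apply, and these are settled directly. Hence $f=\mAD$.

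The main obstacle I expect is routing every component through \mZOT with the correct game. One must confirm that the reduced game is again a scalar multiple of a unanimity game on the smaller player set, so that the induction closes. One must also confirm that the case split is exhaustive, in particular that a component containing a nonmember of $T$ together with another player always has a null $k\neq i$ available.
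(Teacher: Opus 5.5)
\textbf{There is a genuine gap: your case split is not exhaustive, because the claim in case (c) that $\mP(i)=\{i\}$ is false.} If $\mP(i)\not\subseteq T$ and no null $k\neq i$ lies in $\mP(i)$, you only get $i\notin T$ and $\mP(i)\setminus\{i\}\subseteq T$. The set $\mP(i)\setminus\{i\}$ may well be nonempty.

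Take $N=\{1,2\}$, $\mP=\{\{1,2\}\}$, $T=\{1\}$, $\lambda\neq 0$ and $i=2$. Player $2$ is the only null player in its component. \mZOT applied with $k=2$ constrains only the \emph{other} members of $\mP(2)$, never $f_2$ itself. So none of (a), (b), (c) determines $f_2(N,\lambda u_T,\mP)$. This is exactly the exhaustiveness check you flagged at the end, and it fails. When $i$ is the unique nonmember of $T$ in a component that also contains members of $T$, there is no null $k\neq i$ available.

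The repair is short and fits your induction. In the missing subcase ($i\notin T$, $\mP(i)\setminus\{i\}\subseteq T$, $|\mP(i)|\geq 2$), every $j\in\mP(i)\setminus\{i\}$ falls under case (a) with $k=i$. Hence $f_j(N,\lambda u_T,\mP)=\mAD_j(N,\lambda u_T,\mP)$ for all such $j$. Since both $f$ and the \mAD-value satisfy \mCE, you get $f_i(N,\lambda u_T,\mP)=\lambda u_T(\mP(i))-\sum_{j\in\mP(i)\setminus\{i\}}f_j(N,\lambda u_T,\mP)=\mAD_i(N,\lambda u_T,\mP)$.

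So within a component you must first settle the members of $T$ via \mZOT, and only then recover the residual null player from \mCE. The paper does the same thing. It strips all of $C\setminus T$ at once to pin down the payoffs of $C\cap T$, then obtains the payoffs of $C\setminus T$ from \mCE and \mCS.

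With this subcase added, the rest of your argument is sound: the induction on $|N|$, and transporting \mAD-payoffs back to the larger game using the fact that \mAD satisfies \mZOT. It is a legitimate reorganization of the paper's proof.
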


Propositions \ref{PROP_ALPHA} and \ref{PROP_AD} therefore characterize the two CS-values through contrasting requirements on outside option claims.
Moreover, the following result locates the $\chi$-value between these two benchmarks: it is a component-wise convex combination of the $\alpha$-value and the \mAD-value.

\begin{proposition}\label{PROP_CONV_COMB}
For every $(N,v,\mP)\in\Gamma$ and $i\in N$, 
\[
\chi_i(N,v,\mP)=\frac{1}{|\mP(i)|}\alpha_i(N,v,\mP)+(1-\frac{1}{|\mP(i)|})\mAD_i(N,v,\mP).
\]
\end{proposition}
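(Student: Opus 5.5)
This identity follows by direct algebra from the definitions, using the rewritten form of the $\chi$-value from Section \ref{SEC_PREL}. I first fix $(N,v,\mP)\in\Gamma$ and $i\in N$, and abbreviate $n_i=|\mP(i)|$ and $\Omega_i=\sum_{j\in\mP(i)}\omega^*_j(N,v,\mP)$. I use the equivalent expression
\[
\chi_i(N,v,\mP)=\mAD_i(N,v,\mP)+\frac{1}{n_i}\left(n_i\,\omega^*_i(N,v,\mP)-\Omega_i\right).
\]
This expression should itself be checked once. Substituting $\omega^*_j=\mSH_j(N,v)-\mAD_j(N,v,\mP)$ gives $\Omega_i=\sum_{j\in\mP(i)}\mSH_j(N,v)-v(\mP(i))$, because the \mAD-value is efficient on $\mP(i)$: the Shapley value of $v|_{\mP(i)}$ sums to $v(\mP(i))$. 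The right-hand side then becomes $\mSH_i(N,v)+\frac{1}{n_i}\bigl(v(\mP(i))-\sum_{j\in\mP(i)}\mSH_j(N,v)\bigr)$, which is Casajus's formula.

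Next I expand the claimed right-hand side. By the definition of the $\alpha$-value, $\alpha_i=\mAD_i+n_i\omega^*_i-\Omega_i$. Hence
\[
\frac{1}{n_i}\alpha_i+\Bigl(1-\frac{1}{n_i}\Bigr)\mAD_i
=\frac{1}{n_i}\mAD_i+\omega^*_i-\frac{1}{n_i}\Omega_i+\mAD_i-\frac{1}{n_i}\mAD_i
=\mAD_i+\frac{1}{n_i}\bigl(n_i\omega^*_i-\Omega_i\bigr),
\]
and this matches the expression for $\chi_i$ above.

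There is no real obstacle in this argument. The only point needing care is the verification of the rewritten $\chi$-formula, which rests on the component efficiency of the \mAD-value, namely $\sum_{j\in\mP(i)}\mAD_j(N,v,\mP)=v(\mP(i))$ by efficiency of the Shapley value on $(\mP(i),v|_{\mP(i)})$. Note also that the weight $1/n_i$ depends only on the component, which is why the combination is convex component-wise.
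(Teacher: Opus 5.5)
Your proof is correct and follows essentially the same route as the paper. Both arguments use the component efficiency $\sum_{j\in\mP(i)}\mAD_j(N,v,\mP)=v(\mP(i))$ to rewrite Casajus's formula as $\chi_i=\mAD_i+\omega^*_i-\frac{1}{|\mP(i)|}\sum_{j\in\mP(i)}\omega^*_j$, and then compare this with the definition of $\alpha_i$; the only difference is that you expand the convex combination forward, whereas the paper reads it off from the rewritten $\chi$-formula.
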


The weight assigned to the $\alpha$-value in this convex combination, namely, $1/|\mP(i)|$, depends on the component to which player $i$ belongs.
Therefore, all players in the same component, say $C\in\mP$, receive the same weight $1/|C|$, whereas a player in another component $D\in\mP$ has weight $1/|D|$, which may differ from $1/|C|$.

Proposition \ref{PROP_CONV_COMB} clarifies how the $\chi$-value incorporates outside option claims. Since the \mAD-value does not incorporate outside option claims, whereas the $\alpha$-value incorporates them fully, the $\chi$-value incorporates outside option claims in a discounted form, with the weight depending on the size of the player's component.
In particular, the larger the component to which a player belongs, the smaller the fraction of the full outside option adjustment incorporated into the player's payoff.

\section{Conclusion}\label{SEC_CONC}

We have examined how outside options are reflected in payoff distributions in games with coalition structures. Our main findings are as follows. 

\begin{enumerate}

\item We defined the standard outside option claim rule, which quantifies each player's outside option claim in a CS-game, and provided an axiomatic characterization of the rule.

\item We introduced and characterized the $\alpha$-value. We also provided an alternative characterization of the \mAD-value. These results show that the $\alpha$-value fully incorporates outside option claims, whereas the \mAD-value does not incorporate them.

\item We showed that the $\chi$-value can be expressed as a component-wise convex combination of the $\alpha$-value and the \mAD-value. This result implies that the $\chi$-value incorporates outside option claims in a discounted form, with the weight depending on the size of the player's component.
The following formulas make explicit the difference between the $\chi$-value and the $\alpha$-value:
\begin{itemize}
\item[] $\chi_i(N,v,\mP)=\mAD_i(N,v,\mP)+\frac{1}{|\mP(i)|}\left( |\mP(i)|\cdot \omega^*_i(N,v,\mP)-\sum_{j\in \mP(i)}\omega^*_j(N,v,\mP) \right)$,
\item[] $\alpha_i(N,v,\mP)= \mAD_i(N,v,\mP) + \left( |\mP(i)| \cdot \omega^*_i(N,v,\mP) - \sum_{j\in \mP(i)}\omega^*_j(N,v,\mP) \right)$.
\end{itemize}
Both values use the \mAD-value as the baseline allocation, but the $\chi$-value discounts the outside option adjustment by the size of the player's component.
\end{enumerate}

The characterization results are summarized in Table \ref{TABLE_AXIOM}.
The symbols $+$ and $-$ indicate that the corresponding CS-value satisfies and violates the axiom, respectively. The symbol $\oplus$ indicates that the axiom is used to characterize the corresponding CS-value.
\begin{table}[htbp]
\centering
\caption{Characterization results} \label{TABLE_AXIOM}
\begin{tabular}{ccccccccc}
& \mCE & \mCS & \mADD & \mGN & \mSP & \mOU & \mZOT & \mFOT \\
\hline
$\chi$ & $\oplus$ & $\oplus$ & $\oplus$ & $\oplus$ & $\oplus$ & $-$ & $-$ & $-$ \\
$W$ & $\oplus$ & $\oplus$ & $\oplus$ & $+$ & $-$ & $\oplus$ & $-$ & $-$ \\
$\mAD$ & $\oplus$ & $\oplus$ & $\oplus$ & $+$ & $-$ & $-$ & $\oplus$ & $-$  \\
$\alpha$ & $\oplus$ & $\oplus$ & $\oplus$ & $+$ & $-$ & $-$ & $-$ & $\oplus$ \\
\hline
\end{tabular}
\end{table}

\section{Proofs}\label{SEC_PROOFS}
\subsection*{Proof of Proposition \ref{PROP_OMEGA}}
\begin{proof}
The OO-rule $\omega^*(N,v,\mP)=\mSH(N,v)-\mAD(N,v,\mP)$ satisfies \mZG, \mOE, \mZI, and \mEI.
We show its uniqueness.
Suppose that an OO-rule $\omega$ satisfies \mZG, \mOE, \mZI, and \mEI.

Let $N\in \mN$ and $\mP\in\Pi(N)$.
Let $T\subseteq N$ ($T\neq \emptyset$), $\lambda\in\mR$, and $w\in\mG(N)$.
By \mOE, we have
$\sum_{j\in N}\omega_j(N,w+\lambda u_T,\mP)
=(w+\lambda u_T)(N) -\sum_{C\in\mP}(w+\lambda u_T)(C)
=w(N)-\sum_{C\in\mP}w(C) + \left(\lambda u_T(N)-\sum_{C\in\mP}\lambda u_T(C)\right)$
and $\sum_{j\in N}\omega_j(N,w,\mP)=w(N)-\sum_{C\in\mP}w(C)$.
Hence, we have
\begin{align}
\sum_{j\in N} \left(\omega_j(N,w+\lambda u_T,\mP) - \omega_j(N,w,\mP) \right)
&=\lambda u_T(N)-\sum_{C\in\mP}\lambda u_T(C) \nonumber\\
&=\lambda \left(1 - \sum_{C\in\mP}u_T(C) \right). \label{eq_0713_1137}
\end{align}
\mZI implies that for every $i\in N\setminus T$,
\begin{equation}
\omega_i(N,w+\lambda u_T,\mP) - \omega_i(N,w,\mP)=0. \label{eq_0713_1140}
\end{equation}
\mEI implies that $\omega_i(N,w+\lambda u_T,\mP) - \omega_i(N,w,\mP)=\omega_j(N,w+\lambda u_T,\mP) - \omega_j(N,w,\mP)$ for every $i,j\in T$.
Hence, by (\ref{eq_0713_1137}) and (\ref{eq_0713_1140}), for every $i\in T$,
\begin{equation}
\omega_i(N,w+\lambda u_T,\mP) - \omega_i(N,w,\mP) = \frac{\lambda}{|T|}\left(1 - \sum_{C\in\mP}u_T(C) \right). \label{eq_0713_1141}
\end{equation}

First, suppose that $|\mP(T)|=1$.
There is unique $C\in\mP$ such that $T\subseteq C$.
Therefore, $\sum_{C\in\mP}u_T(C)=1$.
It follows from (\ref{eq_0713_1140}) and (\ref{eq_0713_1141}) that for every $i\in N$, $\omega_i(N,w+\lambda u_T,\mP) - \omega_i(N,w,\mP)=0$.
Next, suppose that $|\mP(T)|\geq 2$.
Then, $T$ is not contained in any component of $\mP$. Hence, $\sum_{C\in\mP}u_T(C)=0$.
By (\ref{eq_0713_1141}), for every $i\in T$,
$\omega_i(N,w+\lambda u_T,\mP) - \omega_i(N,w,\mP)=\frac{\lambda}{|T|}$.
By (\ref{eq_0713_1140}), for every $i\in N\setminus T$, $\omega_i(N,w+\lambda u_T,\mP) - \omega_i(N,w,\mP)=0$.
Hence, it holds that for every $i\in N$,
\begin{align}
\omega_i(N,w+\lambda u_T,\mP)-\omega_i(N,w,\mP)=
\begin{cases}
\frac{\lambda}{|T|} & \text{if $i\in T$ and $|\mP(T)|\geq 2$},\\
0 & \text{otherwise}.
\end{cases}
\label{eq_0713_1151}
\end{align}

For every $v\in \mG(N)$, we have $v=\sum_{\emptyset\neq T\subseteq N}\lambda_T^v u_T$.
Starting from the zero game $\textbf{0}$, we repeatedly add $\lambda_T^v u_T$.
\mZG and (\ref{eq_0713_1151}) imply that, for every $i\in N$,
\begin{equation}
\omega_i(N,v,\mP) =\sum_{\substack{T\subseteq N\\ i\in T,\ |\mP(T)|\geq 2}} \frac{\lambda_T^v}{|T|}. \label{eq_0713_1434}
\end{equation}
We have
$\mSH_i(N,v)=\sum_{\substack{T\subseteq N\\ i\in T}} \frac{\lambda_T^v}{|T|}$
and
$\mAD_i(N,v,\mP)=\mSH_i(\mP(i),v|_{\mP(i)})
=\sum_{\substack{T\subseteq N\\ i\in T,\ |\mP(T)|=1}} \frac{\lambda_T^v}{|T|}$.
Hence, (\ref{eq_0713_1434}) implies that $\omega_i(N,v,\mP)=\mSH_i(N,v)-\mAD_i(N,v,\mP)=\omega_i^*(N,v,\mP)$.
\end{proof}

\subsection*{Proof of Proposition \ref{PROP_CE_FOT_GN}}
\begin{proof}
Let $f$ satisfy \mCE.
Let $N\in \mN$ and $v\in \mG(N)$.
Let $k\in N$ be a null player in $v$.
We show that $f_k(N,v,\{N\})=0$.
If $|N|=1$, then $v(N)=v(\{k\})=0$. Hence, by \mCE, $f_k(N,v,\{N\})=0$.
Suppose that $|N|\geq 2$.
If $f$ satisfies \mFOT, then (setting $\mP:=\{N\}$) for every $i\in N\setminus \{k\}$, we have
$f_i(N,v,\{N\})-f_i(N\setminus \{k\},v^{-k},\{N\setminus \{k\}\})=f_i(N,v,\{N\})-f_i(N,v|_{N},\{N\})=0$.
Hence, for every $i\in N\setminus \{k\}$,
\begin{equation}
f_i(N,v,\{N\})=f_i(N\setminus \{k\},v^{-k},\{N\setminus \{k\}\}). \label{eq_0628_1603}
\end{equation}
If $f$ satisfies \mZOT, then (\ref{eq_0628_1603}) holds directly.
Hence, (\ref{eq_0628_1603}) holds in either case.
Now, we have 
\[
v(N)\overset{\mCE}{=}\sum_{j\in N}f_j(N,v,\{N\})
\]
and
\begin{eqnarray*}
v(N\setminus \{k\})
&=&v^{-k}(N\setminus \{k\}) \\
&\overset{\mCE}{=}& \sum_{j\in N\setminus \{k\}}f_j(N\setminus \{k\},v^{-k},\{N\setminus \{k\}\}) \\
&\overset{(\ref{eq_0628_1603})}{=}& \sum_{j\in N\setminus \{k\}}f_j(N,v,\{N\}).
\end{eqnarray*}
Since $k$ is a null player in $v$, we have $v(N)=v(N\setminus \{k\})$, which implies that $\sum_{j\in N}f_j(N,v,\{N\}) = \sum_{j\in N\setminus \{k\}}f_j(N,v,\{N\})$.
Therefore, $f_k(N,v,\{N\})=0$.
\end{proof}

\subsection*{Proof of Proposition \ref{PROP_TU2CS}}
\begin{proof}
For notational simplicity, we write $\delta^{S}_i := \delta_i(S, v|_{S})$ for every $S\subseteq N$ with $i\in S$.

(i) Let $\delta$ satisfy Efficiency. Let $(N,v,\mP)\in \Gamma$. For every $C\in \mP$, we have
\begin{align*}
\sum_{j\in C}f^\delta_j(N,v,\mP)=
\sum_{j\in C} \delta_j^{C} 
+ |C| \cdot \sum_{j\in C}(\delta^{N}_j - \delta^{C}_j) - |C| \cdot \sum_{j\in C}(\delta^{N}_j - \delta^{C}_j)
=\sum_{j\in C} \delta_j^{C}.
\end{align*}
Since $\delta$ satisfies Efficiency, we obtain $\sum_{j\in C} \delta_j^{C}=v(C)$.

(ii) Let $\delta$ satisfy Symmetry. Let $(N,v,\mP)\in \Gamma$, $C\in \mP$, and $i,j\in C$ with $i \overset{v}{\sim} j$. We have
\begin{align*}
&f^\delta_i(N,v,\mP)=\delta^{C}_i + |C| \cdot (\delta^{N}_i - \delta^{C}_i) - \sum_{\ell\in C}(\delta^{N}_\ell - \delta^{C}_\ell), \text{ and}\\
&f^\delta_j(N,v,\mP)=\delta^{C}_j + |C| \cdot (\delta^{N}_j - \delta^{C}_j) - \sum_{\ell\in C}(\delta^{N}_\ell - \delta^{C}_\ell).
\end{align*}
Since $i \overset{v}{\sim} j$ and $i,j\in C$, it holds that $i \overset{v|_{C}}{\sim} j$.
Since $\delta$ satisfies Symmetry, $\delta^{N}_i=\delta^{N}_j$ follows from $i \overset{v}{\sim} j$, and $\delta^{C}_i=\delta^{C}_j$ follows from $i \overset{v|_{C}}{\sim} j$.
Thus, $f^\delta_i(N,v,\mP)=f^\delta_j(N,v,\mP)$.

(iii) Let $\delta$ satisfy Additivity. Let $N\in \mN$, $\mP\in\Pi(N)$, and $v,w\in\mG(N)$.
For every $C\in \mP$ and $S\subseteq C$, we have $(v+w)|_{C}(S)\overset{S\subseteq C}{=}(v+w)(S)=v(S)+w(S)\overset{S\subseteq C}{=}v|_{C}(S)+w|_{C}(S)$. Hence, for every $C\in \mP$, $(v+w)|_{C} = v|_{C} + w|_{C}$.
Since $\delta$ satisfies Additivity, it holds that for every $i\in N$, $\delta_i(\mP(i), (v+w)|_{\mP(i)})=\delta_i(\mP(i), v|_{\mP(i)})+\delta_i(\mP(i), w|_{\mP(i)})$.
Hence, for every $i\in N$, 
\begin{align*}
f^\delta_i(N,v+w,\mP)
=&\delta_i(\mP(i), (v+w)|_{\mP(i)}) \\
&+ |\mP(i)| \cdot (\delta_i(N,v+w) - \delta_i(\mP(i), (v+w)|_{\mP(i)})) \\
&- \sum_{j\in \mP(i)}(\delta_j(N,v+w) - \delta_j(\mP(i), (v+w)|_{\mP(i)})) \\
=&\delta_i(\mP(i), v|_{\mP(i)}) + \delta_i(\mP(i), w|_{\mP(i)}) \\
&+ |\mP(i)| \cdot (\delta_i(N,v) + \delta_i(N,w) - \delta_i(\mP(i), v|_{\mP(i)}) - \delta_i(\mP(i), w|_{\mP(i)})) \\
&- \sum_{j\in \mP(i)}(\delta_j(N,v) + \delta_j(N,w) - \delta_j(\mP(i), v|_{\mP(i)}) - \delta_j(\mP(i), w|_{\mP(i)})) \\
=&f_i^\delta(N,v,\mP)+f_i^\delta(N,w,\mP).
\end{align*}
Thus, $f^\delta$ satisfies \mADD.

(iv) Let $\delta$ satisfy Null Player and Null Player Out.
Let $(N,v,\mP)\in \Gamma$ with $|N|\geq 2$. Suppose that $k\in N$ is a null player in $v$ and $|\mP(k)|\geq 2$.
We show that for every $i\in \mP(k)\setminus \{k\}$, $f^\delta_i(N,v,\mP)-f^\delta_i(N\setminus \{k\},v^{-k},\mP^{-k})=f^\delta_i(N,v,\{N\})-f^\delta_i(\mP(i),v|_{\mP(i)},\{\mP(i)\})$.

Let $i\in \mP(k)\setminus \{k\}$. We have
\begin{align}
&f^\delta_i(N,v,\mP) - f^\delta_i(N\setminus \{k\},v^{-k},\mP^{-k}) \nonumber\\
&=\delta^{\mP(i)}_i + |\mP(i)| \cdot (\delta^{N}_i - \delta^{\mP(i)}_i) - \sum_{j\in \mP(i)}(\delta^{N}_j - \delta^{\mP(i)}_j) \nonumber\\
&-\delta^{\mP(i)\setminus \{k\}}_i - (|\mP(i)|-1) \cdot (\delta^{N\setminus \{k\}}_i - \delta^{\mP(i)\setminus \{k\}}_i) + \sum_{j\in \mP(i)\setminus \{k\}}(\delta^{N\setminus \{k\}}_j - \delta^{\mP(i)\setminus \{k\}}_j). \label{eq_0629_0954}
\end{align}
Since $k\in \mP(i)$ is a null player in $v$, $k$ is also a null player in $v|_{\mP(i)}$.
Since $\delta$ satisfies Null Player Out, it holds that $\delta^{N}_j=\delta^{N\setminus \{k\}}_j$ and $\delta^{\mP(i)}_j=\delta^{\mP(i)\setminus \{k\}}_j$ for every $j\in \mP(i)\setminus \{k\}$.
Hence, we have
\begin{align}
(\ref{eq_0629_0954})=(\delta^{N\setminus \{k\}}_i - \delta^{\mP(i)\setminus \{k\}}_i)-(\delta^{N}_k - \delta^{\mP(i)}_k). \label{eq_0629_0958}
\end{align}
Since $\delta$ satisfies Null Player, we have $\delta^{N}_k = \delta^{\mP(i)}_k =0$. Therefore, we obtain
\begin{equation}
(\ref{eq_0629_0958})
=\delta^{N\setminus \{k\}}_i - \delta^{\mP(i)\setminus \{k\}}_i=\delta^{N}_i - \delta^{\mP(i)}_i, \label{eq_0629_1007}
\end{equation}
where the last equality follows from Null Player Out.

We now consider the right-hand side of the equation of \mFOT. It holds that
\begin{align*}
&f^\delta_i(N,v,\{N\}) - f^\delta_i(\mP(i),v|_{\mP(i)},\{\mP(i)\}) \\
&=\delta^{N}_i + |N| \cdot (\delta^{N}_i - \delta^{N}_i) - \sum_{j\in N}(\delta^{N}_j - \delta^{N}_j) \\
&-\delta^{\mP(i)}_i - |\mP(i)| \cdot (\delta^{\mP(i)}_i - \delta^{\mP(i)}_i) + \sum_{j\in \mP(i)}(\delta^{\mP(i)}_j - \delta^{\mP(i)}_j) \\
&=\delta^{N}_i - \delta^{\mP(i)}_i.
\end{align*}
This is equal to (\ref{eq_0629_1007}).
\end{proof}

\subsection*{Proof of Proposition \ref{PROP_ALPHA}}
\begin{proof}
Sufficiency: Setting $\delta=\mSH$, the function $f^\delta$ of Proposition \ref{PROP_TU2CS} coincides with the $\alpha$-value.
Since the Shapley value satisfies Efficiency, Symmetry, Additivity, Null Player, and Null Player Out, Proposition \ref{PROP_TU2CS} implies that the $\alpha$-value satisfies \mCE, \mCS, \mADD, and \mFOT.

Uniqueness: Let $f$ satisfy \mCE, \mCS, \mADD, and \mFOT.
Since $f$ satisfies \mCE and \mFOT, Proposition \ref{PROP_CE_FOT_GN} implies that $f$ satisfies \mGN.
Let $(N,v,\mP)\in \Gamma$.
Since $v$ is uniquely expressed as $v=\sum_{T\in 2^N\setminus \{\emptyset\}} \lambda^v_T u_T$, \mADD implies that $f(N,v,\mP)=\sum_{T\in 2^N\setminus \{\emptyset\}} f(N, \lambda^v_T u_T, \mP)$. Hence, we show that for every $\lambda\in\mR$ and nonempty $T\subseteq N$, $f(N,\lambda u_T,\mP)$ is uniquely determined.
Let $\lambda\in\mR$ and $T\subseteq N$ with $T\neq \emptyset$. Consider the following two cases: $|\mP(T)|=1$ and $|\mP(T)|\geq 2$.
\\

Case $|\mP(T)|=1$:
Let $C\in \mP(T)$. We have $T\subseteq C$.
If $T=C$, then \mCE and \mCS imply that for every $i\in C(=T)$, $f_i(N, \lambda u_T, \mP)=\frac{\lambda}{|T|}$.
Therefore, consider $T\subsetneq C$. Let $K:=\{k_1,...,k_m\}=C\setminus T$.
Let $i\in T$. Hence, $|C|\geq |K|+1$.
Since $k_1$ is a null player in $\lambda u_T$, we have
\begin{equation}
f_i(N,\lambda u_T,\mP)-f_i(N\setminus \{k_1\},\lambda u_T^{-k_1},\mP^{-k_1})
\overset{\mFOT}{=}f_i(N,\lambda u_T,\{N\})-f_i(C,\lambda u_T|_{C},\{C\}). \label{eq_0702_1001}
\end{equation}
From \mGN, \mCE, \mCS, and $i\in T$, it follows that $f_i(N,\lambda u_T,\{N\})= \frac{\lambda}{|T|}$ and $f_i(C,\lambda u_T|_{C},\{C\})=\frac{\lambda}{|T|}$.
Hence, by (\ref{eq_0702_1001}), we have
\[
f_i(N,\lambda u_T,\mP)-f_i(N\setminus \{k_1\},\lambda u_T^{-k_1},\mP^{-k_1})=0.
\]
Similarly, we have
\[
f_i(N\setminus \{k_1\},\lambda u_T^{-k_1},\mP^{-k_1})- f_i(N\setminus \{k_1,k_2\}, \lambda u_T^{-\{k_1,k_2\}}, \mP^{-\{k_1,k_2\}}) =0.
\]
Repeating this, we have
\[
f_i(N\setminus \{k_1,...,k_{m-1}\}, \lambda u_T^{-\{k_1, ..., k_{m-1}\}}, \mP^{-\{k_1, ..., k_{m-1}\}}) - f_i(N\setminus K,\lambda u_T^{-K},\mP^{-K}) =0.
\]
Summing up the equations, we obtain
\[
f_i(N,\lambda u_T,\mP) - f_i(N\setminus K,\lambda u_T^{-K},\mP^{-K}) =0,
\]
where, by \mCE and \mCS, $f_i(N\setminus K,\lambda u_T^{-K},\mP^{-K})=\frac{\lambda}{|T|}$.
Hence, $f_i(N,\lambda u_T,\mP)=\frac{\lambda}{|T|}$. This holds for every $i\in T$.
Therefore, \mCE and \mCS imply that $f_i(N,\lambda u_T,\mP)=0$ for every $i\in C\setminus T$.
Note that by \mCE and \mCS, $f_i(N, \lambda u_T, \mP)=0$ for every $i\in N\setminus C$.
\\

Case $|\mP(T)|\geq 2$:
Let $C\in \mP(T)$. Since $|\mP(T)|\geq 2$, $T\setminus C\neq \emptyset$ (equivalently, $T\not\subseteq C$).
If $C\subseteq T$, then \mCE and \mCS imply that $f_i(N,\lambda u_T,\mP)=0$ for every $i\in C$.
Hence, consider $C\in \mP(T)$ with $C\not\subseteq T$. Hence, $C\setminus T\neq \emptyset$.
Let $K:=\{k_1,...,k_m\}=C\setminus T$.
Let $i\in C\cap T$. Hence, $|C|\geq |K|+1$.
Since $k_1$ is a null player in $\lambda u_T$, we have
\begin{equation}
f_i(N,\lambda u_T,\mP)-f_i(N\setminus \{k_1\},\lambda u_T^{-k_1},\mP^{-k_1})
\overset{\mFOT}{=}f_i(N,\lambda u_T,\{N\})-f_i(C,\lambda u_T|_{C},\{C\}). \label{eq_0702_1003}
\end{equation}
From \mGN, \mCE, \mCS, and $i\in C\cap T$, it follows that $f_i(N,\lambda u_T,\{N\})= \frac{\lambda}{|T|}$.
Since $T\not\subseteq C$, we have $u_T|_{C}(S)=0$ for every $S\subseteq C$.
Hence, all members of $C$ are null players in $u_T|_{C}$. 
By \mGN, $f_i(C,\lambda u_T|_{C},\{C\})=0$.
Therefore, by (\ref{eq_0702_1003}), we have
\[
f_i(N,\lambda u_T,\mP)-f_i(N\setminus \{k_1\},\lambda u_T^{-k_1},\mP^{-k_1})=\frac{\lambda}{|T|}.
\]
Similarly, we have
\[
f_i(N\setminus \{k_1\},\lambda u_T^{-k_1},\mP^{-k_1})- f_i(N\setminus \{k_1,k_2\}, \lambda u_T^{-\{k_1,k_2\}}, \mP^{-\{k_1,k_2\}}) =\frac{\lambda}{|T|}.
\]
Repeating this, we have
\[
f_i(N\setminus \{k_1,...,k_{m-1}\}, \lambda u_T^{-\{k_1, ..., k_{m-1}\}}, \mP^{-\{k_1, ..., k_{m-1}\}}) - f_i(N\setminus K,\lambda u_T^{-K},\mP^{-K}) =\frac{\lambda}{|T|}.
\]
Summing up the equations, we obtain
\[
f_i(N,\lambda u_T,\mP) - f_i(N\setminus K,\lambda u_T^{-K},\mP^{-K}) =|K|\cdot\frac{\lambda}{|T|}=\lambda \cdot\frac{|C\setminus T|}{|T|},
\]
where, by \mCE and \mCS, $f_i(N\setminus K,\lambda u_T^{-K},\mP^{-K})=0$.
Hence, $f_i(N,\lambda u_T,\mP)=\lambda \frac{|C\setminus T|}{|T|}$. 
Therefore, for every $C\in \mP(T)$, \mCE and \mCS imply that $f_i(N,\lambda u_T,\mP)=\lambda \frac{|C\setminus T|}{|T|}$ for every $i\in C\cap T$, and $f_i(N,\lambda u_T,\mP)=-\lambda \frac{|C\cap T|}{|T|}$ for every $i\in C\setminus T$.
Note that by \mCE and \mCS, $f_i(N,\lambda u_T,\mP)=0$ for every $i\in N\setminus (\cup_{C'\in \mP(T)}C')$.
\\

From both cases, it follows that
for every $i\in \cup_{C'\in \mP(T)}C'$,
\[
f_i(N,\lambda u_T,\mP)=
\begin{cases}
\frac{\lambda}{|T|} & \text{if $|\mP(T)|=1$ and $i\in T$},\\
0 & \text{if $|\mP(T)|=1$ and $i\notin T$},\\
\lambda \frac{|\mP(i)\setminus T|}{|T|} & \text{if $|\mP(T)|\geq 2$ and $i\in T$},\\
-\lambda \frac{|\mP(i)\cap T|}{|T|} & \text{if $|\mP(T)|\geq 2$ and $i\notin T$},
\end{cases}
\]
and $f_i(N,\lambda u_T,\mP)=0$ for every $i\in N\setminus \cup_{C'\in \mP(T)}C'$.
This coincides with $\alpha(N,\lambda u_T,\mP)$.
Hence, $f(N,v,\mP)=\alpha(N,v,\mP)$.
\end{proof}

\subsection*{Proof of Proposition \ref{PROP_AD}}
\begin{proof}
Sufficiency: The \mAD-value satisfies \mCE, \mCS, and \mADD.
Let $(N,v,\mP)\in \Gamma$ with $|N|\geq 2$. Let $k\in N$ be a null player in $v$ and $|\mP(k)|\geq 2$. Since the Shapley value satisfies Null Player Out, for every $i\in \mP(k)\setminus \{k\}$, we have $\mAD_i(N,v,\mP)=\mSH_i(\mP(i), v|_{\mP(i)})
=\mSH_i(\mP(i)\setminus \{k\}, v|_{\mP(i)\setminus \{k\}})
=\mAD_i(N\setminus \{k\},v^{-k},\mP^{-k})$.
Hence, the \mAD-value satisfies \mZOT.

Uniqueness: Let $f$ satisfy \mCE, \mCS, \mADD, and \mZOT.
Let $(N,v,\mP)\in \Gamma$.
Since $v$ is uniquely expressed as $v=\sum_{T\in 2^N\setminus \{\emptyset\}} \lambda^v_T u_T$, \mADD implies that $f(N,v,\mP)=\sum_{T\in 2^N\setminus \{\emptyset\}} f(N, \lambda^v_T u_T, \mP)$. Hence, we show that for every $\lambda\in\mR$ and nonempty $T\subseteq N$, $f(N,\lambda u_T,\mP)$ is uniquely determined.
Let $\lambda\in\mR$ and $T\subseteq N$ with $T\neq \emptyset$. Consider the following two cases: $|\mP(T)|=1$ and $|\mP(T)|\geq 2$.
\\

Case $|\mP(T)|=1$:
Let $C\in \mP(T)$. We have $T\subseteq C$.
If $T=C$, then \mCE and \mCS imply that for every $i\in C(=T)$, $f_i(N, \lambda u_T, \mP)=\frac{\lambda}{|T|}$.
Therefore, consider $T\subsetneq C$. Let $K:=\{k_1,...,k_m\}=C\setminus T$.
Let $i\in T$. Hence, $|C|\geq |K|+1$.
Since $k_1,...,k_m$ are null players in $\lambda u_T$, \mZOT implies that
\begin{eqnarray}
f_i(N,\lambda u_T,\mP)
&=&f_i(N\setminus \{k_1\},\lambda u_T^{-k_1},\mP^{-k_1})\nonumber\\
&=&f_i(N\setminus \{k_1,k_2\}, \lambda u_T^{-\{k_1,k_2\}}, \mP^{-\{k_1,k_2\}})\nonumber\\
&...& \nonumber\\
&=&f_i(N\setminus K,\lambda u_T^{-K},\mP^{-K})\nonumber\\
&\overset{\mCE,\mCS}{=}&\frac{\lambda}{|T|}\label{eq_0702_1017}
\end{eqnarray}
Therefore, $f_i(N,\lambda u_T,\mP)=\frac{\lambda}{|T|}$ for every $i\in T$ and, by \mCE and \mCS, $f_i(N,\lambda u_T,\mP)=0$ for every $i\in C\setminus T$.
Note that by \mCE and \mCS, $f_i(N, \lambda u_T, \mP)=0$ for every $i\in N\setminus C$.
\\

Case $|\mP(T)|\geq 2$:
Let $C\in \mP(T)$. Since $|\mP(T)|\geq 2$, $T\setminus C\neq \emptyset$ (equivalently, $T\not\subseteq C$).
If $C\subseteq T$, then \mCE and \mCS imply that $f_i(N,\lambda u_T,\mP)=0$ for every $i\in C$.
Hence, consider $C\in \mP(T)$ with $C\not\subseteq T$. Hence, $C\setminus T\neq \emptyset$.
Let $K:=\{k_1,...,k_m\}=C\setminus T$. Since $C\in \mP(T)$, $|C\cap T|\geq 1$. Hence, $|C|\geq |K|+1$.
Since $k_1,...,k_m$ are null players in $\lambda u_T$, in the same manner as (\ref{eq_0702_1017}), for every $i\in C\cap T$,
\[
f_i(N,\lambda u_T,\mP)\overset{\mZOT}{=}f_i(N\setminus K,\lambda u_T^{-K},\mP^{-K})\overset{\mCE,\mCS}{=}0.
\]
Therefore, by \mCE and \mCS, $f_i(N,\lambda u_T,\mP)=0$ for every $i\in C\setminus T$.
Moreover, \mCE and \mCS imply that $f_i(N,\lambda u_T,\mP)=0$ for every $i\in N\setminus (\cup_{C'\in \mP(T)}C')$.
\\

From both cases, it follows that
for every $i\in \cup_{C'\in \mP(T)}C'$,
\[
f_i(N,\lambda u_T,\mP)=
\begin{cases}
\frac{\lambda}{|T|} & \text{if $|\mP(T)|=1$ and $i\in T$},\\
0 & \text{if $|\mP(T)|=1$ and $i\notin T$},\\
0 & \text{if $|\mP(T)|\geq 2$ and $i\in T$},\\
0 & \text{if $|\mP(T)|\geq 2$ and $i\notin T$},
\end{cases}
\]
and $f_i(N,\lambda u_T,\mP)=0$ for every $i\in N\setminus (\cup_{C'\in \mP(T)}C')$.
This coincides with $\mAD(N,\lambda u_T,\mP)$.
Hence, $f(N,v,\mP)=\mAD(N,v,\mP)$.
\end{proof}

\subsection*{Proof of Proposition \ref{PROP_CONV_COMB}}
\begin{proof}
Let $(N,v,\mP)\in\Gamma$ and $i\in N$. 
Since $v(\mP(i))=\sum_{j\in \mP(i)}\mAD_j(N,v,\mP)$, we have
\[
\chi_i(N,v,\mP)
=\mSH_i(N,v)+\frac{1}{|\mP(i)|}\sum_{j\in \mP(i)}(\mAD_j(N,v,\mP)-\mSH_j(N,v)).
\]
Since $\omega^*_j(N,v,\mP)=\mSH_j(N,v)-\mAD_j(N,v,\mP)$ for every $j\in N$, we have
\[
\chi_i(N,v,\mP)
=\mAD_i(N,v,\mP)+\omega^*_i(N,v,\mP)-\frac{1}{|\mP(i)|}\sum_{j\in \mP(i)}\omega^*_j(N,v,\mP).
\]
From
\[
\alpha_i(N,v,\mP)
=\mAD_i(N,v,\mP)
+|\mP(i)|\cdot\omega^*_i(N,v,\mP)-\sum_{j\in \mP(i)}\omega^*_j(N,v,\mP),
\]
it follows that
\[
\chi_i(N,v,\mP)=\frac{1}{|\mP(i)|}\alpha_i(N,v,\mP)+(1-\frac{1}{|\mP(i)|})\mAD_i(N,v,\mP).
\]
\end{proof}

\section{Independence of Axioms}\label{SEC_INDEP}

\subsection{Independence of Axioms of Proposition \ref{PROP_OMEGA}}
\begin{itemize}
\item Violating \mZG. Let $\sigma$ be an ordering of all players in $\mathcal{U}$. For every finite $T\subseteq \mathcal{U}$ with $|T|\geq 2$, let $1^{\sigma}(T)$ ($2^{\sigma}(T)$) denote the first (second) player in coalition $T$. For every $i\in T$, let
\[
\rho^\sigma_i(T)=
\begin{cases}
1 & \text{if $i=1^{\sigma}(T)$},\\
-1 & \text{if $i=2^{\sigma}(T)$},\\
0 & \text{otherwise}.
\end{cases}
\]
For every $T\subseteq \mathcal{U}$ with $|T|= 1$, let $\rho^\sigma_i(T)=0$.
Define $\kappa^\sigma_i(N,v,\mP):=\omega^*_i(N,v,\mP)+\rho^\sigma_i(N)$.
This function violates \mZG and satisfies the other three axioms. 

\item Violating \mOE. Let $\omega^{0}_i(N,v,\mP):=0$ for every $(N,v,\mP)\in \Gamma$ and $i\in N$. This function violates \mOE and satisfies the other three axioms. 

\item Violating \mZI. Let $\epsilon_i(N,v,\mP):=\mSH_i(N,v)-\frac{1}{|N|}\sum_{C\in \mP}v(C)$ for every $(N,v,\mP)\in \Gamma$ and $i\in N$. This function violates \mZI and satisfies the other three axioms. 

\item Violating \mEI. Let $\sigma$ be an ordering of all players in $\mathcal{U}$. For every $(N,v,\mP)\in \Gamma$ and $i\in N$, let
\[
g^{\sigma}_i(N,v,\mP):=\sum_{\substack{T\subseteq N\\ |\mP(T)|\geq 2,\ 1^{\sigma}(T)=i}}\lambda^v_T.
\]
This function violates \mEI and satisfies the other three axioms. 
\end{itemize}

\subsection{Independence of Axioms of Proposition \ref{PROP_ALPHA}}
\begin{itemize}
\item Violating \mCE. Let $f_i^{0}(N,v,\mP):=0$ for every $(N,v,\mP)\in \Gamma$ and $i\in N$. This function violates \mCE and satisfies the other three axioms. 

\item Violating \mCS. Let $\sigma$ be an ordering of all players in $\mathcal{U}$.
Let $P^{\sigma}_i$ represent the set of predecessors of $i$ in $\sigma$.
For every $N\in \mN$ and $i\in N$, let $P^{\sigma}_i|_N:=P^{\sigma}_i \cap N$.
For every $v\in \mG(N)$ and $i\in N$, let $m^{\sigma}_i(N,v):=v(P^{\sigma}_i|_N \cup \{i\})-v(P^{\sigma}_i|_N)$.
Note that $m^{\sigma}$ is a TU-value that satisfies Efficiency, Null Player, Null Player Out, and Additivity.
Setting $\delta:=m^\sigma$ in Proposition \ref{PROP_TU2CS}, we obtain $f^{m^\sigma}$ that violates \mCS and satisfies the other three axioms. 

\item Violating \mFOT. The \mAD-value violates \mFOT and satisfies the other three axioms. 

\item Violating \mADD. For every $N\in\mN$ and $v\in \mG(N)$, let $h(N,v):=\sum_{j\in N}(\mSH_j(N,v))^2$. For every $i\in N$, define
\[
q_i(N,v)=
\begin{cases}
(\mSH_i(N,v))^2 + \mSH_i(N,v)(1-\frac{h(N,v)}{v(N)}) & \text{if $v(N)\neq 0$},\\
\ \mSH_i(N,v) & \text{if $v(N)= 0$}.\\
\end{cases}
\]
Note that $q$ is a TU-value that satisfies Efficiency, Symmetry, Null Player, and Null Player Out.
Setting $\delta:=q$ in Proposition \ref{PROP_TU2CS}, we obtain $f^{q}$ that violates \mADD and satisfies the other three axioms. 
\end{itemize}

\subsection{Independence of Axioms of Proposition \ref{PROP_AD}}
\begin{itemize}
\item Violating \mCE. The function $f^{0}$ violates \mCE and satisfies the other three axioms. 

\item Violating \mCS. Let $\sigma$ be an ordering of all players in $\mathcal{U}$.
For every $(N,v,\mP)\in \Gamma$ and $i\in N$, let $\mu^{\sigma}_i(N,v,\mP):=v(P^{\sigma}_i|_{\mP(i)} \cup \{i\})-v(P^{\sigma}_i|_{\mP(i)})$.
Function $\mu^\sigma$ violates \mCS and satisfies the other three axioms. 

\item Violating \mZOT. The $\alpha$-value violates \mZOT and satisfies the other three axioms. 

\item Violating \mADD. 
For every $(N,v,\mP)\in \Gamma$ and $i\in N$, let $\eta_i(N,v,\mP)=q_i(\mP(i), v|_{\mP(i)})$.
Function $\eta$ violates \mADD and satisfies the other three axioms. 
\end{itemize}

\section*{Declaration}

\noindent\textbf{Funding.} The author gratefully acknowledge the financial support from JSPS: No.25K16603.

\noindent\textbf{Conflict of interest.} There are no conflicts of interest.

\noindent\textbf{Data availability.} This work is not based on any empirical data. 

\noindent\textbf{Acknowledgments} The author appreciates the helpful comments provided by Rene van den Brink, Youngsub Chun, Yukihiko Funaki, Yukio Koriyama, and Satoshi Nakada.

\end{document}